\documentclass[letterpaper, 10 pt]{IEEEtran} 
\usepackage{graphics} 
\usepackage{epsfig} 
\usepackage{mathptmx} 
\usepackage{times} 
\usepackage{amsmath} 
\usepackage{amssymb} 

\usepackage{amsthm}
\usepackage{amssymb}
\usepackage{xypic}
\theoremstyle{plain}

\newtheorem{thm}{Theorem}
\newtheorem{proposition}[thm]{Proposition}
\newtheorem{cor}[thm]{Corollary}
\newtheorem{lem}[thm]{Lemma}

\theoremstyle{definition}
\newtheorem{df}[thm]{Definition}

\newtheorem{eg}[thm]{Example}

\def\shf{\mathcal}

\def\st{\textrm{star }}
\def\cl{\textrm{cl }}
\def\active{\textrm{active}}
\def\roi{\textrm{roi }}


\title{Protocol-independent critical node detection}
%
\author{\IEEEauthorblockN{Michael Robinson, Jimmy Palladino}\\
\IEEEauthorblockA{Department of Mathematics and Statistics\\
American University\\
4400 Massachusetts Ave NW\\
Washington, DC 20016\\
Email: michaelr@american.edu}}

\begin{document}
\maketitle
\begin{abstract}
This article explains how to construct a protocol-independent model for passing traffic through a wireless network with a single channel carrier sense multiple access/collision detection (CSMA/CD) media access model.
\end{abstract}
%
%
\section{Introduction}

When a carrier sense multiple access/collision detection (CSMA/CD) media access model is used in a wireless network, only one node in a given vicinity can transmit while the others must wait.  Although the physical layer protocols of wireless networks can be quite complex, the basic topology of the network plays an important role in determining network performance.  This paper addresses the problem of identifying critical nodes and links within a network by using local invariants derived from the local topology of the network.  Recognizing that although protocol plays an important role, we are specifically concerned with those effects that are \emph{protocol independent}.

This paper provides theoretical justification for the ``right'' local neighborhood in a wireless network with a CSMA/CD media access model using the structure of network activation patterns, and then validates the resulting topological invariants using simulated network traffic generated with $ns2$.

\section{Historical context and contributions}

Graph theory methods have been used extensively (for instance \cite{Nandagopal_2000,Yang_2002,Jain_2003,Lee_2007}) for identifying critical nodes in a network that carry a disproportionate amount of traffic.  However, direct application of graph theory to locate these nodes is computationally expensive \cite{DiSumma_2011,dinh2012}.  Furthermore, graphs are better suited to \emph{wired} networks and don't necessarily address the multi-way interactions inherent in wireless networks \cite{Chiang_2007}.

The present paper extends our previous work \cite{Joslyn_2016,RobinsonGlobalSIP2014} that used higher-dimensional abstract simplicial complexes instead of graphs and used connectivity as a measure of network health.  Although connectivity can be a useful measure of health \cite{Noubir_2004,Gueye_2010}, it is rather coarse.  We remedy this with a more systematic study of an 802.11b wireless network using the $ns2$ network simulator \cite{nsnam}.

Our previous work also used an apparently \emph{ad hoc} definition of the local neighborhood of a node in order to perform its analysis.  This article provides solid theoretical justification for that choice, and demonstrates the viability of the resulting \emph{local homological} vulnerability of a node.  This provides a faster method of identifying critical nodes than direct graph-theoretic ones.

\section{Interference from a transmission}

This paper advocates the use of abstract simplicial complexes (which generalize undirected graphs) as a means of modeling wireless networks. 

\begin{df}
An \emph{abstract simplicial complex} $X$ on a set $A$ is a collection of ordered subsets of $A$ that is closed under the operation of taking subsets.  We call an element of $X$ which itself contains $k+1$ elements a \emph{$k$-cell}.  We usually call a $0$-cell a \emph{vertex} or \emph{node} and a $1$-cell an \emph{edge}.  

If $a,b$ are cells with $a \subset b$, we say that $a$ is a \emph{face} of $b$, and that $b$ is a \emph{coface} of $a$.  A cell of $X$ that has no cofaces is called a \emph{facet}.

For a set $Y$ of cells in $X$, we let the \emph{closure} $(\cl Y)$ be the smallest abstract simplicial complex that contains $Y$ and the \emph{star} $(\st Y)$ be the set of all cells that have at least one face in $Y$.  
\end{df}

Let a wireless network consist of a single channel, with nodes $N=\{n_i\}$ in a region $R$.  Associate an open set $U_i \subset R$ to each node $n_i$ that represents its \emph{transmitter coverage region}.  For each node $n_i$, a continuous function $s_i : U_i \to \mathbb{R}$ represents its \emph{signal level} at each point in $U_i$.  Without loss of generality, we assume that there is a global threshold $T$ for accurately decoding the transmission from any node.  In \cite{RobinsonGlobalSIP2014}, two abstract simplicial complex models were developed: the \emph{interference} and \emph{link} complexes.

\begin{df}
The \emph{interference complex} $I=I(N,U,s,T)$ consists of all subsets of $N$ of the form $\{i_1, \dotsc, i_n\}$ for which $U_{i_1} \cap \dotsb \cap U_{i_n}$ contains a point $x \in R$ for which $s_{i_k}(x) > T$ for all $k=1, \dotsb n$. 
\end{df}
Briefly, the interference complex describes the lists of transmitters that when transmitting will result in at least one mobile receiver location receiving multiple signals simultaneously.  (The interference complex is a \v{C}ech complex \cite{Hatcher_2002}.)

\begin{proposition}
Each facet of the interference complex corresponds to a maximal collection of nodes that mutually interfere.
\end{proposition}
\begin{proof}
Let $c$ be a cell of the interference complex.  Then $c$ is a collection of nodes whose coverages have a nontrivial intersection.  The decoding threshold is exceeded for all nodes at some point $x$ in this intersection.  If any two nodes in $c$ transmit simultaneously, they will interfere at $x$.  If $c$ is a facet, it is contained in no larger cell, so it is clearly maximal.
\end{proof}

\begin{df}
The \emph{link graph} is the following collection of subsets of $N$:
\begin{enumerate}
\item $\{n_i\} \in N$ for each node $n_i$, and
\item $\{n_i,n_j\}\in N$ if $s_i(n_j) > T$ and $s_j(n_i) >T$.
\end{enumerate}
The \emph{link complex} $L=L(N,U,s,T)$ is the clique complex of the link graph, which means that it contains all elements of the form $\{i_1,\dotsc,i_n\}$ whenever this set is a clique in the link graph.
\end{df}

\begin{proposition}
Each facet in the link complex is a maximal set of nodes that can communicate directly with one another (with only one transmitting at a time).
\end{proposition}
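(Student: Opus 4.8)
The plan is to unwind the definitions in the same spirit as the proof given for the interference complex, showing that the combinatorial notion of a facet coincides with the operational notion of a maximal directly-communicating set. First I would translate the edge condition of the link graph into operational terms: an edge $\{n_i, n_j\}$ is present precisely when $s_i(n_j) > T$ and $s_j(n_i) > T$, i.e.\ when each of $n_i, n_j$ lies above the decoding threshold of the other's signal. This is exactly the condition that $n_i$ and $n_j$ can decode one another's transmissions, so an edge encodes bidirectional direct communication.

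Next I would pass from edges to cells. Because $L$ is by definition the clique complex of the link graph, a set $c = \{n_{i_1}, \dotsc, n_{i_n}\}$ is a cell of $L$ if and only if it is a clique, that is, if and only if every pair $\{n_{i_j}, n_{i_k}\} \subset c$ is an edge. By the previous step this is the same as saying that every pair of nodes in $c$ can communicate directly. Hence the cells of $L$ are exactly the sets of nodes that can pairwise communicate directly. The CSMA/CD constraint enters only to justify the parenthetical: since all nodes of $c$ lie within mutual decoding range they mutually interfere, so at most one may transmit at any instant and the direct communication within $c$ must be serialized.

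Finally I would treat maximality and thereby close the correspondence. A facet is a cell with no coface, so a facet of $L$ is a clique contained in no strictly larger clique, i.e.\ a maximal clique; translating once more, this is a set of pairwise directly-communicating nodes to which no further node can be added while preserving the property that all pairs communicate directly. Conversely, any such maximal set is a clique admitting no enlarging clique and is therefore a facet, so the two notions agree. The step requiring the most care is not any calculation but the interpretation of the phrase \emph{communicate directly with one another}: the clique complex guarantees only that communication is possible pairwise, not that the entire group can exchange signals at once, and the parenthetical ``(only one transmitting at a time)'' flags precisely this. I would therefore be explicit that the phrase is to be read pairwise, matching the clique condition, and note that the serialization forced by mutual interference is exactly what makes the pairwise reading the operationally correct one.
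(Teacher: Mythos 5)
Your proposal is correct and takes essentially the same route as the paper: the published proof likewise just unwinds the clique-complex definition to show that every pair of nodes in a cell satisfies $s_i(n_j) > T$ and $s_j(n_i) > T$ and hence can communicate directly. If anything, you are more thorough than the paper, whose proof stops after the pairwise-communication step and never explicitly argues maximality of facets (a step its analogous interference-complex proof does include), whereas you spell out both directions of the facet--maximal-clique correspondence and flag the pairwise reading of ``communicate with one another.''
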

\begin{proof}
Let $c$ be a cell of the link complex.  By definition, for each pair of nodes, $i,j\in c$ implies that $s_i(n_j) > T$ and $s_j(n_i) >T$.  Therefore, $i$ and $j$ can communicate with one another.  
\end{proof}

\begin{cor}
Facets of the link complexes represent common broadcast resources.
\end{cor}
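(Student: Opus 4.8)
The plan is to read this corollary as a direct reinterpretation of the preceding proposition in the operational language of a single-channel CSMA/CD medium, so that most of the work lies in making the phrase \emph{common broadcast resource} precise rather than in any new computation. First I would fix the intended meaning: a broadcast resource is a single shared channel on which a transmission from one participant is decodable by every other participant, and on which carrier sensing forces the participants to arbitrate access so that at most one transmits at any instant. The goal is then to show that a facet of the link complex is exactly such an object, and conversely that every such object arises as a facet.

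Next I would establish the \emph{broadcast} property. Let $c$ be a facet, hence a clique in the link graph. For any two nodes $i,j \in c$ the defining condition $s_i(n_j) > T$ guarantees that a transmission from $n_i$ is decodable at $n_j$; since this holds for every ordered pair, a single transmission from any member of $c$ reaches all the others directly. Thus one transmission services the entire set, which is precisely the broadcast behavior required.

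Then I would establish the \emph{common}, or shared, property from the CSMA/CD model. Because each pair in $c$ can communicate, each node can sense the carrier of every other node; hence whenever one member transmits, the remaining members detect the carrier and defer. By the preceding proposition the members of $c$ therefore contend for a single channel on which only one may transmit at a time, which is exactly the behavior of one shared medium rather than several independent ones. Maximality then closes the argument in both directions: since $c$ is a facet it is a maximal mutually-communicating set, so no node outside $c$ can be added without violating pairwise communication, and the facet thus captures the full extent of the shared medium rather than a fragment of a larger one; conversely, any maximal collection of nodes sharing one broadcast medium is pairwise mutually communicating and hence a clique admitting no enlargement, i.e. a facet.

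The main obstacle, as anticipated, is not a technical estimate but the need to pin the informal term \emph{broadcast resource} to the pairwise link condition together with the single-transmitter CSMA/CD constraint tightly enough that the correspondence with facets is a genuine equivalence and not merely a one-way implication. I would therefore be careful to argue both the forward direction (every facet is a common broadcast resource) and its converse, since only then does the word \emph{represent} in the statement carry its full force.
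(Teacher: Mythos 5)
Your proposal is correct and matches the paper's (implicit) reasoning: the corollary is given without a separate proof precisely because it is a direct reinterpretation of the preceding proposition, with a facet's pairwise decodability condition $s_i(n_j) > T$, $s_j(n_i) > T$ supplying the broadcast property and its maximality supplying the ``common resource'' reading under the one-transmitter CSMA/CD constraint. Your added converse direction (every maximal shared broadcast medium is a facet) goes slightly beyond what the paper asserts, but it is sound given your definition and only strengthens the correspondence.
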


Since the CSMA/CD protocol is implemented locally, it can be modeled as follows:

\begin{df}
Suppose that $X$ is a simplicial complex (such as an interference or link complex) whose set of vertices is $N$.  Consider the following assignment $\shf{A}$ of additional information to capture which nodes are transmitting and decodable:
\begin{enumerate}
\item To each cell $c\in X$, assign the set
\begin{eqnarray*}
\shf{A}(c)&=&\{n \in N : \text{there exists a cell } d\in X \text{ with }\\
&& c \subset d \text{ and } n\in d\}\cup\{\perp\}
\end{eqnarray*} 
of nodes that have a coface in common with $c$, along with the symbol $\perp$.  We call $\shf{A}(c)$ the \emph{stalk} of $\shf{A}$ at $c$.
\item To each pair $c \subset d$ of cells, assign the \emph{restriction function}
\begin{equation*}
\shf{A}(c\subset d)(n) =
\begin{cases}
n&\text{if }n\in \shf{A}(d)\\
\perp&\text{otherwise}\\
\end{cases}
\end{equation*}
\end{enumerate}
\end{df}

For instance, if $c \in X$ is a cell of a link complex, $\shf{A}(c)$ specifies which nearby node is transmitting and decodable, or $\perp$ if none are.  The restriction functions relate the decodable transmitting nodes at the nodes to which nodes are decodable along an attached wireless link.  Similarly, if $c \in X$ is a cell of an interference complex, $\shf{A}(c)$ also specifies which nearby node is transmitting, and effectively locks out any interfering transmissions from other nodes.  

\begin{df}
The assignment $\shf{A}$ is called the \emph{activation sheaf} and is an example of a \emph{cellular sheaf} \cite{Shepard_1985} -- a mathematical object that stores local data.  The theory of sheaves explains how to extract consistent information, which in the case of networks consists of nodes whose transmissions do not interfere with one another.

A \emph{section} of $\shf{A}$ supported on a subset $Y \subseteq X$ is a function $s:Y \to N$ so that for each $c \subset d$ in $Y$, $s(c) \in \shf{A}(c)$ and $\shf{A}(c\subset d)\left( s(c)\right) = s(d)$.  A section supported on $X$ is called a \emph{global section}.  
\end{df}

Specifically, global sections are complete lists of nodes that can be transmitting without interference.  

\begin{figure}
\begin{center}
\includegraphics[width=2.5in]{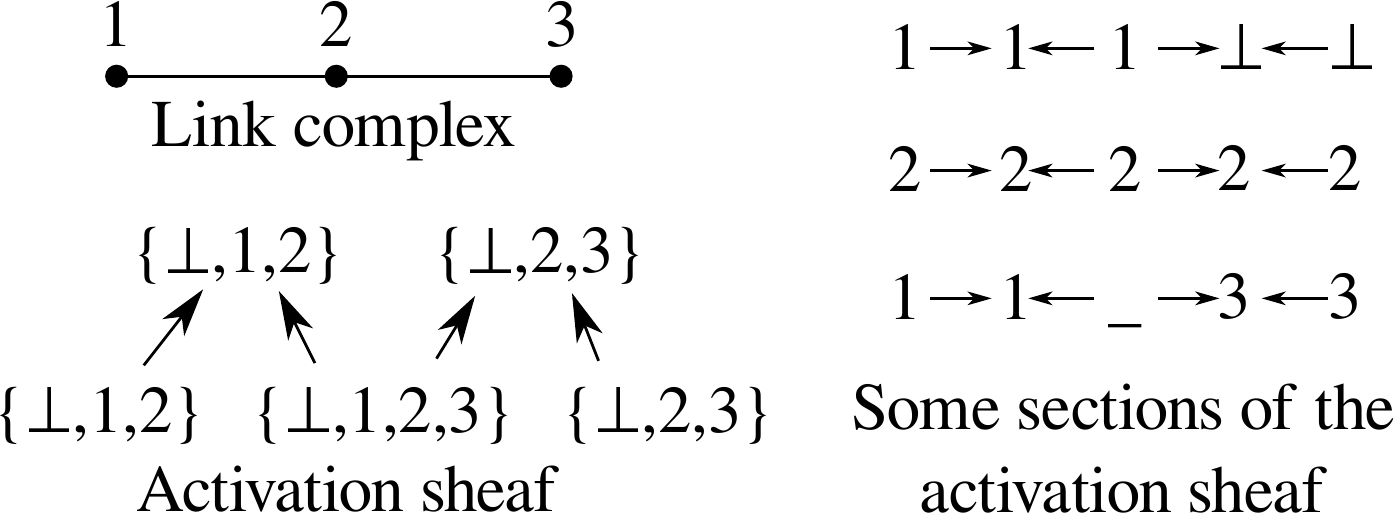}
\caption{A link complex (left top), sheaf $\shf{A}$ (left bottom), and three sections (right).  The restrictions are shown with arrows.  global section when node 1 transmits (right top), global section when node 2 transmits (right middle), and a local section with nodes 1 and 3 attempting to transmit, interfering at node 2 (right bottom)}
\label{fig:linesec}
\end{center}
\end{figure}

\begin{eg}
\label{eg:linesec}
Figure \ref{fig:linesec} shows a network with three nodes, labeled 1, 2, and 3.  When node 1 transmits, node 2 receives.  Because node 2 is busy, its link to node 3 must remain inactive (right top).  When node 2 transmits, both nodes 1 and 3 receive (right middle).  The right bottom diagram shows a local section that cannot be extended to the cell marked with a blank.  This corresponds to the situation where nodes 1 and 3 attempt to transmit but instead cause interference at node 2. 
\end{eg}

\begin{df}
Suppose that $s$ is a global section of $\shf{A}$.  The \emph{active region} associated to a node $n\in X$ in $s$ is the set
\begin{equation*}
\active (s,n) = \{a \in X : s(a)=n\},
\end{equation*}
which is the set of all nodes that are currently waiting on $n$ to finish transmitting.
\end{df}

\begin{lem}
\label{lem:act}
The active region of a node is a connected, closed subcomplex of $X$ that contains $n$.
\end{lem}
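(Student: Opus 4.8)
The plan is to read off every assertion from a single property of the restriction functions. For a face relation $c \subset d$ the section condition reads $\shf{A}(c\subset d)(s(c)) = s(d)$, and by definition $\shf{A}(c\subset d)$ returns either $s(c)$ (when $s(c)\in\shf{A}(d)$) or $\perp$. Hence whenever $s(d)\neq\perp$ the equation can only be satisfied on the non-$\perp$ branch, forcing $s(c)=s(d)$. I will first use this downward: fix a cell $a$ with $s(a)=n$ (we may assume $\active(s,n)$ is nonempty, as otherwise the claim is vacuous) and let $b\subseteq a$ be any face. Applying the observation to $b\subset a$ with $s(a)=n\neq\perp$ gives $s(b)=n$, so $b\in\active(s,n)$. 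This shows $\active(s,n)$ is closed under passing to faces, i.e.\ it is a closed subcomplex of $X$.

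Next I would establish the ``coning'' step that powers both remaining claims. If $s(a)=n$ then $n\in\shf{A}(a)$, so by the definition of the stalk there is a cell containing both $a$ and the vertex $\{n\}$; since $X$ is closed under subsets, the smallest such cell $d := a\cup\{n\}$ is itself a cell of $X$. Because $n$ is a vertex of $d$, the node $n$ lies in $\shf{A}(d)$ (witnessed by $d$ itself), so the restriction $\shf{A}(a\subset d)$ fixes $n$ and the section condition forces $s(d)=n$; thus $a\cup\{n\}\in\active(s,n)$. Applying the downward closure from the previous paragraph to the face $\{n\}\subseteq a\cup\{n\}$ then yields $s(\{n\})=n$, so the vertex $n$ itself lies in $\active(s,n)$.

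Finally, connectedness will follow from the coning step rather than requiring a separate argument. For any vertex $\{m\}\in\active(s,n)$ the same reasoning produces the edge $\{m,n\}=\{m\}\cup\{n\}\in\active(s,n)$, which joins $\{m\}$ to $\{n\}$ inside the subcomplex; hence every vertex of $\active(s,n)$ is adjacent to $n$, the $1$-skeleton is connected, and the subcomplex is connected. The main obstacle is really the coning step: one must notice that $a\cup\{n\}$ is automatically a cell (forced by $n\in\shf{A}(a)$) and that its section value is pinned to $n$, even though $\{n\}$ need not be a face of $a$. Once this is in hand, closedness, the containment of $n$, and connectivity are all immediate corollaries, since $\active(s,n)$ is exhibited as a star of cells around the vertex $n$.
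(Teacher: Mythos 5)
Your proposal is correct and takes essentially the same route as the paper: downward closure via the $\perp$-branch of the restriction maps, then producing a coface of $a$ containing the vertex $n$ and pushing the section value $n$ up to it, which simultaneously yields $n \in \active(s,n)$ and connectedness through cells containing $n$. The only difference is cosmetic: the paper invokes an arbitrary common coface $d$ guaranteed by the stalk definition, whereas you explicitly cone with $\{n\}$, noting that $a\cup\{n\}$ is itself a cell by downward closure of $X$.
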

\begin{proof}
Consider a cell $c \in \active (s,n)$.  If $c$ is not a vertex, then there exists a $b \subset c$; we must show that $b\in \active (s,n)$.  Since $s$ is a global section $\shf{A}(b\subset c)s(b)=s(c)=n$.  Because $s(c) \not= \perp$, the definition of the restriction function $\shf{A}(b\subset c)$ implies that $s(b)=n$.  Thus $b\in \active (s,n)$ so $\active (s,n)$ is closed.

If $c \in \active (s,n)$, then $c$ and $n$ have a coface $d$ in common.  Since $s$ is a global section $s(d)=\shf{A}(c \subset d)s(c)=\shf{A}(c \subset d)n=n$.  Thus, $n \in \active(s,n)$, because $n$ is a face of $d$ and $\active (s,n)$ is closed.  This also shows that every cell in $\active (s,n)$ is connected to $n$.
\end{proof}

\begin{lem}
\label{lem:stactive}
The star over the active region of a node does not intersect the active region of any other node.
\end{lem}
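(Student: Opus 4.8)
The plan is to argue by contradiction and reduce everything to a single structural feature of the activation sheaf: its restriction functions can only ever return their input node or the symbol $\perp$, never some \emph{other} node. Granting this, two distinct active regions can never be joined across a face relation, which is precisely what the statement asserts. So suppose $n \not= m$ are distinct nodes and that some cell $c$ lies in both $\st(\active(s,n))$ and $\active(s,m)$; I would aim for a contradiction.

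First I would unwind the two hypotheses. Membership $c \in \active(s,m)$ means, by definition of the active region, that $s(c) = m$; in particular $s(c) \not= \perp$ since $m$ is a genuine node. Membership $c \in \st(\active(s,n))$ means, by definition of the star, that $c$ has a face $b$ with $b \in \active(s,n)$, i.e. $b \subseteq c$ and $s(b) = n$.

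Next I would apply the global-section condition to the pair $b \subset c$. Since $s$ is a global section, $\shf{A}(b \subset c)\big(s(b)\big) = s(c)$, that is, $\shf{A}(b \subset c)(n) = m$. But by the definition of the restriction function, $\shf{A}(b \subset c)(n)$ equals $n$ when $n \in \shf{A}(c)$ and equals $\perp$ otherwise, so its value always lies in $\{n,\perp\}$. Hence $m \in \{n,\perp\}$, contradicting both $m \not= n$ and $m \not= \perp$, and completing the argument.

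I do not expect a genuine obstacle here; the only point needing care is the degenerate case $b = c$, which arises if the star is taken to contain $Y$ itself. In that case $s(b) = s(c)$ forces $n = m$ immediately, so the argument still closes. The one idea worth isolating and stating explicitly is that every restriction $\shf{A}(b \subset c)$ acts as the identity-or-$\perp$ map on nodes, which is the exact mechanism preventing the star of one node's active region from reaching into another's.
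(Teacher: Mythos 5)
Your proof is correct and takes essentially the same route as the paper's: both hinge on the single observation that each restriction $\shf{A}(b\subset c)$ can only return its input node or $\perp$, applied to a face $b \subseteq c$ with $s(b)=n$ via the global-section condition. The only differences are cosmetic --- you argue by contradiction where the paper directly concludes $s(c)=\perp$, and your treatment of the degenerate case $b=c$ plays the same role as the paper's ``without loss of generality, assume $c \notin \active(s,n)$'' step.
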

\begin{proof}
Let $c\in \st \active (s,n)$.  Without loss of generality, assume that $c \notin \active (s,n)$.  Therefore, there is a $b \in \active (s,n)$ with $b\subset c$.  By the definition of the restriction function $\shf{A}(b\subset c)$, the assumption that $c\notin \active (s,n)$, and the fact that $s$ is a global section, $s(c)$ must be $\perp$.
\end{proof}

\begin{cor}
  If $s$ is a global section of an activation sheaf $\shf{A}$, then the \emph{support} of $s$ -- the set of cells $c$ where $s(c) \not= \perp$ -- consists of a disjoint union of active regions of nodes. 
\end{cor}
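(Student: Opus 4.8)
The plan is to read off the decomposition directly from the definitions, verifying first that the support breaks up as a set-theoretic disjoint union of the active regions, and then promoting this to a genuine disjoint union of subcomplexes by invoking Lemmas \ref{lem:act} and \ref{lem:stactive}.

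First I would observe that the assignment $c \mapsto s(c)$ restricts to a function $\supp s \to N$: by the definition of the support, $s(c) \neq \perp$ for every $c \in \supp s$, so $s(c)$ is an honest node. For a fixed node $n$ the fiber of this function over $n$ is precisely $\{c \in X : s(c) = n\} = \active(s,n)$. Because $s$ is single-valued, no cell can satisfy $s(c) = n$ and $s(c) = m$ simultaneously for distinct $n,m$, so the active regions are pairwise disjoint as sets of cells; and conversely every cell of every $\active(s,n)$ lies in $\supp s$, since $s$ takes a non-$\perp$ value there. This yields the set-theoretic identity $\supp s = \bigsqcup_{n \in N} \active(s,n)$.

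Next I would upgrade this to a disjoint union of \emph{subcomplexes} rather than merely of sets. By Lemma \ref{lem:act} each $\active(s,n)$ is a closed, connected subcomplex of $X$, so it remains only to rule out an incidence between two distinct active regions. Suppose $c \in \active(s,n)$, $d \in \active(s,m)$ with $n \neq m$, and $c \subset d$ (the case $d \subset c$ is symmetric). Then $d$ has a face in $\active(s,n)$, so $d \in \st \active(s,n)$; but Lemma \ref{lem:stactive} says the star over $\active(s,n)$ meets no other active region, contradicting $d \in \active(s,m)$. Hence no cell of $\supp s$ straddles two active regions, and the $\active(s,n)$ are exactly the connected components of $\supp s$.

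The bookkeeping in the first two steps is essentially immediate, and I expect the only real content—and therefore the only obstacle—to lie in the separation step. Set-theoretic disjointness is forced by $s$ being a function, but calling the decomposition a disjoint union of subcomplexes requires that no cell be incident to two different active regions; this is exactly what Lemma \ref{lem:stactive} supplies, since it guarantees that passing to a coface can only preserve the value $n$ or drop it to $\perp$, never jump to a distinct node $m$. I would therefore lean entirely on Lemma \ref{lem:stactive} to close this gap.
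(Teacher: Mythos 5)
Your proposal is correct and follows exactly the route the paper intends: the paper states this corollary without proof as an immediate consequence of Lemmas \ref{lem:act} and \ref{lem:stactive}, and your argument---the support partitions set-theoretically into the fibers $\active(s,n)$ because $s$ is single-valued, each fiber is a closed connected subcomplex by Lemma \ref{lem:act}, and Lemma \ref{lem:stactive} rules out any face--coface incidence between distinct active regions---is precisely the routine verification the authors omitted. Your closing observation that the active regions are in fact the connected components of the support is a harmless (and true) strengthening beyond what the corollary asserts.
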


\begin{lem}
  \label{lem:activeinvariant}
The active region of a node is independent of the global section.  More precisely, if $r$ and $s$ are global sections of $\shf{A}$ and the active regions associated to $n \in X$ are nonempty in both, then $\active (s,n)=\active (r,n)$.
\end{lem}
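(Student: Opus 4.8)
The plan is to show that a nonempty active region does not depend on the section at all: whenever it is nonempty, $\active(s,n)$ coincides with a single subcomplex determined by $n$ and $X$ alone, namely the closed star $\cl \st \{n\}$. Once both $\active(s,n)$ and $\active(r,n)$ are identified with this same subcomplex, the desired equality is immediate.

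First I would record an elementary restatement of stalk membership: for a cell $d$, one has $n \in \shf{A}(d)$ if and only if $d \cup \{n\} \in X$. Indeed, $n \in \shf{A}(d)$ means $n$ and $d$ share a coface $e$, that is, $d \subseteq e$ and $n \in e$; since $X$ is closed under taking subsets, this holds exactly when $d \cup \{n\}$ is itself a cell. This converts the sheaf-theoretic condition into a purely combinatorial one and identifies the set $\{d : n \in \shf{A}(d)\}$ with $\cl \st \{n\}$.

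Next I would prove the two inclusions. The inclusion $\active(s,n) \subseteq \cl \st \{n\}$ is easy: if $s(a) = n$ then $n \in \shf{A}(a)$, so $a \cup \{n\} \in X$. For the reverse inclusion, assume $\active(s,n)$ is nonempty; by Lemma \ref{lem:act} it contains the vertex $\{n\}$, so $s(\{n\}) = n$. Given any cell $a$ with $a \cup \{n\} \in X$, I would argue in two cases. If $n \in a$, then $\{n\} \subseteq a$ and the global-section condition gives $s(a) = \shf{A}(\{n\} \subset a)(n)$; because $n \in \shf{A}(a)$, the restriction function returns $n$, so $a \in \active(s,n)$. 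If $n \notin a$, I would apply the first case to $d = a \cup \{n\}$ to conclude $d \in \active(s,n)$, and then invoke the closedness from Lemma \ref{lem:act} to pull the face $a \subset d$ back into $\active(s,n)$. Either way $a \in \active(s,n)$, so $\cl \st \{n\} \subseteq \active(s,n)$.

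The crux --- and the only place the earlier lemmas genuinely do work --- is the reverse inclusion for cells $a$ that neighbor $n$ without containing it. The restriction functions are deterministic but only propagate the value $n$ upward to cofaces, so determinism alone cannot directly force such an $a$ into the active region; the argument closes only because Lemma \ref{lem:act} lets me recover $a$ as a face of the larger cell $a \cup \{n\}$, which does contain $n$. Combining the inclusions yields $\active(s,n) = \cl \st \{n\}$ whenever the region is nonempty, and the identical argument gives $\active(r,n) = \cl \st \{n\}$, whence $\active(s,n) = \active(r,n)$.
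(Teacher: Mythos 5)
Your proof is correct, but it takes a different route from the paper's: it establishes the strictly stronger statement that a nonempty active region is always the fixed combinatorial object $\cl \{n\}$'s star closure --- that is, $\active (s,n) = \cl \st \{n\}$ --- whereas the paper never computes what the active region is. The paper instead argues by direct comparison of the two sections: given $c \in \active (s,n)$, stalk membership $n \in \shf{A}(c)$ produces a common coface $d$ of $c$ and $n$; then $r(n)=n$ (from Lemma \ref{lem:act} and nonemptiness) pushes up to $r(d)=n$, and closedness of $\active (r,n)$ pulls back down to $c$. Your argument runs the same engine --- the value $n$ at the vertex propagates up to any cell $a \cup \{n\}$ and descends to faces via the closedness in Lemma \ref{lem:act} --- but packages it once, against a canonical set, rather than twice, between $r$ and $s$. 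What the paper's version buys is brevity and minimal hypotheses; what yours buys is an explicit formula that makes the invariance trivial, immediately recovers the conclusions of Lemma \ref{lem:act}, and illuminates the later definition of the region of influence as a star over a closure (your identity is exactly the vertex-level analogue of the paper's $\st \cl f$ for a facet $f$). One convention to flag, though it is not a gap: your biconditional ``$n \in \shf{A}(d)$ iff $d \cup \{n\} \in X$'' needs a cell to count as its own coface when $n \in d$ and $d$ is a facet (otherwise a facet containing $n$ would have $n \notin \shf{A}(d)$ and nothing could be active on it); the paper's own proof of Lemma \ref{lem:act} computes $\shf{A}(c \subset d)(n) = n$ with $n \in d$ and so relies on the same non-strict reading, so you are consistent with the source. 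Your closing observation --- that determinism of the restrictions only propagates values upward, so the case $n \notin a$ genuinely requires descending from $a \cup \{n\}$ by closedness --- correctly isolates the one step where Lemma \ref{lem:act} does real work, in both your proof and the paper's.
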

\begin{proof}
Without loss of generality, we need only show that $\active (s,n) \subseteq \active (r,n)$.  If $c \in\active(s,n)$, there must be a cell $d\in X$ that has both $n$ and $c$ as faces.  Now $s(n)=r(n)=n$ by Lemma \ref{lem:act}, which means that $r(d)=\shf{A}(n \subset d)r(n)=n$.  Therefore, since $\active (r,n)$ is closed, this implies that $c \in \active(r,n)$.  
\end{proof}

\begin{cor}
  \label{cor:activationsections}
  The space of global sections of an activation sheaf consists of all sets of nodes that can be transmitting simultaneously without interference.
\end{cor}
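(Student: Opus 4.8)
The statement asserts a bijective correspondence between the global sections of $\shf{A}$ and the interference-free collections of simultaneously transmitting nodes, so the plan is to construct maps $\Phi$ and $\Psi$ in both directions and verify that they are mutually inverse. Here I read ``interference-free'' precisely: a set $T \subseteq N$ is interference-free when the active regions of its members are pairwise disjoint, which is both the physical no-collision condition and exactly the separation produced by Lemma~\ref{lem:stactive}. I define $\Phi$ by sending a global section $s$ to the set $T_s = \{n \in N : \active(s,n) \neq \emptyset\}$ of nodes carrying a transmission, and $\Psi$ by sending an interference-free set $T$ to the labeling that marks the active region of each $n \in T$ with $n$ and sends every remaining cell to $\perp$. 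Most of the needed facts are already in hand from the preceding lemmas, so the proof is mainly a matter of assembling them.

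For $\Phi$, I would take a global section $s$ and show $T_s$ is interference-free. The corollary following Lemma~\ref{lem:stactive} already writes the support of $s$ as the disjoint union $\bigcup_{n \in T_s} \active(s,n)$, so the active regions of distinct transmitters are disjoint; Lemma~\ref{lem:stactive} upgrades this to the statement that the star over one active region avoids every other, so no cell can hear two transmitters at once. That is precisely interference-freeness, so $\Phi(s) = T_s$ lands in the intended target.

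For $\Psi$, I would first record that a single node $n$ always transmits in the global section assigning $n$ to every cell $c$ with $n \in \shf{A}(c)$ and $\perp$ to the rest; a direct check of the restriction condition shows this is a section, and Lemma~\ref{lem:activeinvariant} then identifies the active region of $n$ canonically with $\{c : n \in \shf{A}(c)\}$. Given an interference-free $T$, disjointness of these regions makes $\Psi(T)(c)$ unambiguous. The main obstacle, and the only genuine content of the argument, is verifying the restriction condition $\shf{A}(c \subset d)\bigl(\Psi(T)(c)\bigr) = \Psi(T)(d)$ for every $c \subset d$: when $c,d$ lie in the same active region it reduces to $n \in \shf{A}(d)$, true by the explicit description; when $\Psi(T)(c) = \perp$ the restriction of $\perp$ is $\perp$ and closedness (Lemma~\ref{lem:act}) forces $\Psi(T)(d) = \perp$; and in the boundary case of $c$ inside some region and $d$ outside it, $d \notin \{c : n \in \shf{A}(c)\}$ makes the restriction of $n$ equal $\perp = \Psi(T)(d)$. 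Mutual inverseness then follows quickly --- $\Phi(\Psi(T)) = T$ since each $n \in T$ labels its nonempty active region, and $\Psi(\Phi(s)) = s$ since Lemma~\ref{lem:activeinvariant} matches the reconstructed regions to those of $s$ while the support corollary accounts for the rest --- which is exactly the claimed identification of global sections with interference-free transmitting sets.
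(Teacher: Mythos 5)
Your proof is correct, and it actually supplies more than the paper does. The paper gives no proof of this corollary at all: it is presented as an immediate consequence of Lemmas \ref{lem:act}, \ref{lem:stactive}, \ref{lem:activeinvariant} and the corollary on supports, and those results only establish one direction of your correspondence --- that a global section $s$ decomposes its support into pairwise disjoint active regions and hence determines an interference-free transmitting set (your $\Phi$). The surjectivity half is nowhere in the paper: no global section is ever explicitly constructed there, so the claim that \emph{every} interference-free set of nodes is realized by a section is, strictly speaking, left unproven. Your singleton sections $s_n$ (with $s_n(c)=n$ exactly when $n\in\shf{A}(c)$, so that Lemma \ref{lem:activeinvariant} pins the canonical active region of $n$ down as $\cl\st\{n\}$) and the patching argument for $\Psi(T)$ fill precisely this hole, and your reading of ``no interference'' as disjointness of these regions is the right formalization, since $c\in\cl\st\{m\}\cap\cl\st\{n\}$ means exactly that $m,n\in\shf{A}(c)$, i.e., some cell hears two transmitters. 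One step you should make explicit: in your boundary case ($\Psi(T)(c)=n$ but $n\notin\shf{A}(d)$) you assert $\Psi(T)(d)=\perp$, which requires ruling out $d\in\cl\st\{m\}$ for some \emph{other} $m\in T$, not just for $n$; this follows from the monotonicity $\shf{A}(d)\subseteq\shf{A}(c)$ that you already use elsewhere --- such an $m$ would satisfy $m\in\shf{A}(c)$, putting $c$ in two regions and contradicting disjointness --- but as written your sentence only addresses $n$'s own region.
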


\section{Using activation patterns}

The structure of the global sections of an activation sheaf leads to a model in which an active node silences all other nodes in its vicinity.  

\begin{df}
  Because of the Lemmas, we call the star over an active region associated to a node $n$ the \emph{region of influence}.  The region of influence of a facet is the star over the closure of that facet.  The region of influence for a collection of facets $F$ can be written as a union
\begin{equation*}
\roi F = \bigcup_{f \in F} \st \cl f.
\end{equation*}
\end{df}

In our previous work \cite{RobinsonGlobalSIP2014}, the region of influence was used without detailed justification; the following Corollary provides this needed justification.

\begin{cor}
\label{cor:unaffected}
The complement of the region of influence of a facet is a closed subcomplex.
\end{cor}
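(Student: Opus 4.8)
The plan is to reduce the statement to a single combinatorial fact: a star is \emph{upward-closed}, i.e.\ closed under passing to cofaces. Recall that by definition the region of influence of a facet $f$ is the star $\st \cl f$ of its closure, and that a subcomplex of $X$ is exactly a set of cells that is closed under taking faces (this is what ``closed'' is emphasizing). So to prove the corollary I would show that the complement $X \setminus \st \cl f$ is closed under taking faces, and the cleanest route is to prove the dual closure property for the star itself and then contrapose.

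First I would establish that $\st Y$ is closed under cofaces for an arbitrary set $Y$ of cells. If a cell $c$ has some face $a \in Y$ and $c$ is itself a face of a larger cell $c'$, then $a \subseteq c \subseteq c'$, so $a$ is a face of $c'$ as well, whence $c' \in \st Y$. This uses nothing beyond transitivity of the face (subset) relation, so it applies verbatim with $Y = \cl f$.

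Next I would transfer this upward-closure of the star into downward-closure of its complement via the contrapositive. Suppose $c \in X \setminus \st \cl f$ and let $b \subseteq c$ be a face of $c$. If $b$ were in $\st \cl f$, then, since $c$ is a coface of $b$, the coface-closure just proved would force $c \in \st \cl f$, contradicting the choice of $c$. Hence $b \in X \setminus \st \cl f$, so the complement is closed under faces and is therefore a closed subcomplex of $X$.

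There is essentially no hard step here; the only thing to watch is the direction of the implication, since it is tempting to argue directly about the complement rather than proving the coface-closure of the star and then contraposing. As a transparency check I would note the explicit description $\st \cl f = \{c \in X : c \cap f \neq \emptyset\}$, whose complement $\{c \in X : c \cap f = \emptyset\}$ is visibly closed under passing to subsets; this confirms the abstract argument and makes the underlying combinatorics completely concrete.
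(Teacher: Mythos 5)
Your proof is correct, and it is worth noting that the paper itself supplies no argument at all for this corollary: it is stated bare, with the preceding sheaf-theoretic lemmas (in particular Lemma \ref{lem:stactive} about stars over active regions) serving as informal motivation. Your route is therefore not so much different as it is the missing proof, and it is pitched at the right level of generality: the statement is purely combinatorial, independent of the activation sheaf, and your two steps --- upward-closure of $\st Y$ for an arbitrary set of cells $Y$ (by transitivity of the face relation), then contraposition to get downward-closure of $X \setminus \st \cl f$ --- prove the stronger fact that the complement of \emph{any} star is a closed subcomplex. This is exactly what the appendix later needs, where Corollary \ref{cor:unaffected} is invoked to see that the quotient sheaf $\shf{S}$ is supported on closed subcomplexes, so the extra generality is genuinely useful rather than gratuitous. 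Your concrete description $\st \cl f = \{c \in X : c \cap f \neq \emptyset\}$ is also a valuable sanity check, with one small caveat: the paper defines a face by $a \subset b$ and the star as the cells with ``at least one face'' in $Y$, so your identity (and indeed the sensible reading of the definition, under which the vertices of $f$ belong to its own region of influence) requires the reflexive convention that every cell is a face of itself; your argument uses $a \subseteq c \subseteq c'$ throughout, which is consistent with that convention, but it deserves the one-line remark you implicitly made.
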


Given this justification, \cite{RobinsonGlobalSIP2014} shows that critical nodes or links are those cells $c$ for whom the local homology dimension (see also \cite{Joslyn_2016})
\begin{equation*}
  LH_k(c) = \text{dim } H_k(X,X\,\backslash\,\roi c)
\end{equation*}
is larger than the average.

This implies the following experimental hypothesis: \emph{If a node is critical, it will have a large local homology dimension}.  Since the $ns2$ network simulator provides complete transcripts of all packets, we can define a critical node to be one that \emph{forwards} a large number of packets compared to other nodes in the network \cite{Arulselvan_2009}.

\begin{figure}
\begin{center}
\includegraphics[width=2in]{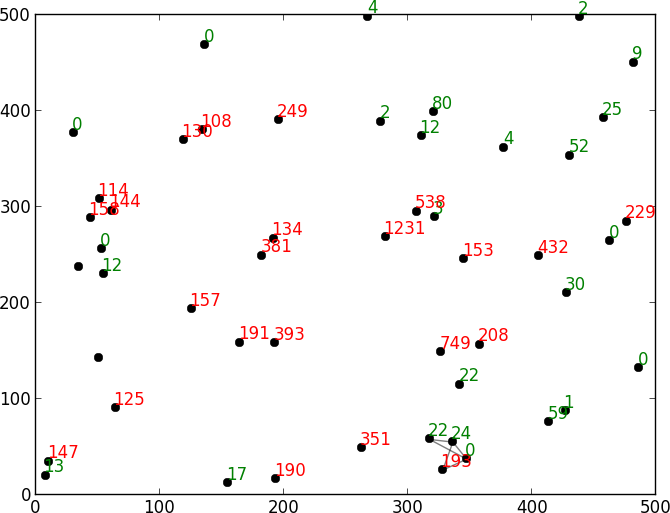} 
\caption{Locations of nodes and forwarded packet counts (axes in meters)}
\label{fig:network}
\end{center}
\end{figure}

We constructed a small simulation with 50 nodes as shown in Figure \ref{fig:network}.  Packets were randomly assigned source and destination nodes within the network, and all packet histories were recorded for analysis.

\begin{figure}
\begin{center}
\includegraphics[width=2in]{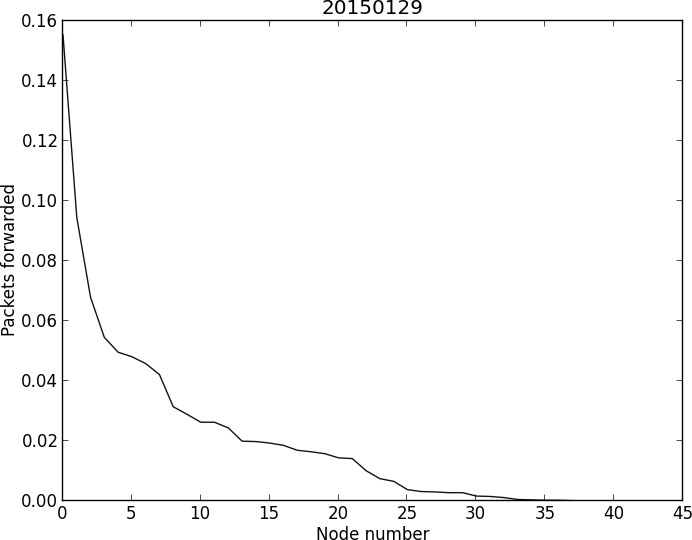} 
\caption{Probability that a given packet will be forwarded by a specific node}
\label{fig:fwdhist}
\end{center}
\end{figure}

Figure \ref{fig:fwdhist} shows the probability that a node will forward a random packet.  (The node numbers have been sorted from greatest to least probability.) The figure shows that most nodes forward only a small number of packets, while a few nodes carry considerably more traffic.

\begin{figure}
\begin{center}
\includegraphics[width=2in]{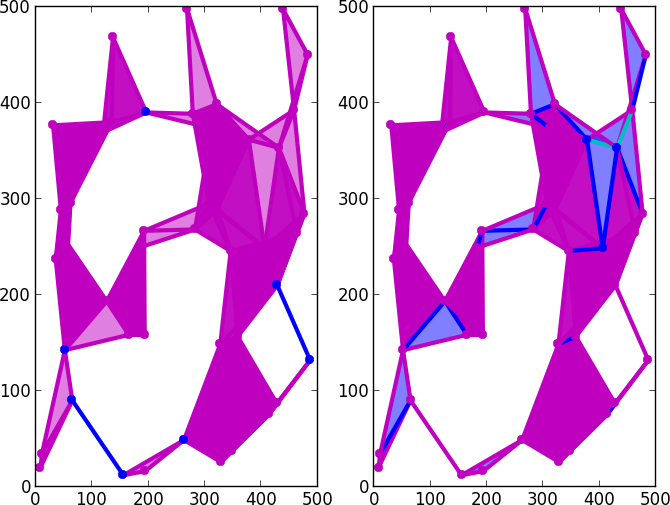} 
\caption{Dimension of local homology $LH_1$ (left) and $LH_2$ (right).  Axes in meters; Magenta = 0, Blue = 1, Cyan = 2.}
\label{fig:networklh}
\end{center}
\end{figure}

Figure \ref{fig:networklh} shows the dimension of local homology over all nodes and links in the network.  In this particular network, the local homology dimension is only 0, 1, or 2.  It is clear that nodes with high $LH_1$ occupy certain ``pinch points'' in the network.

\section{Conclusions}

\begin{figure}
\begin{center}
\includegraphics[width=3.5in]{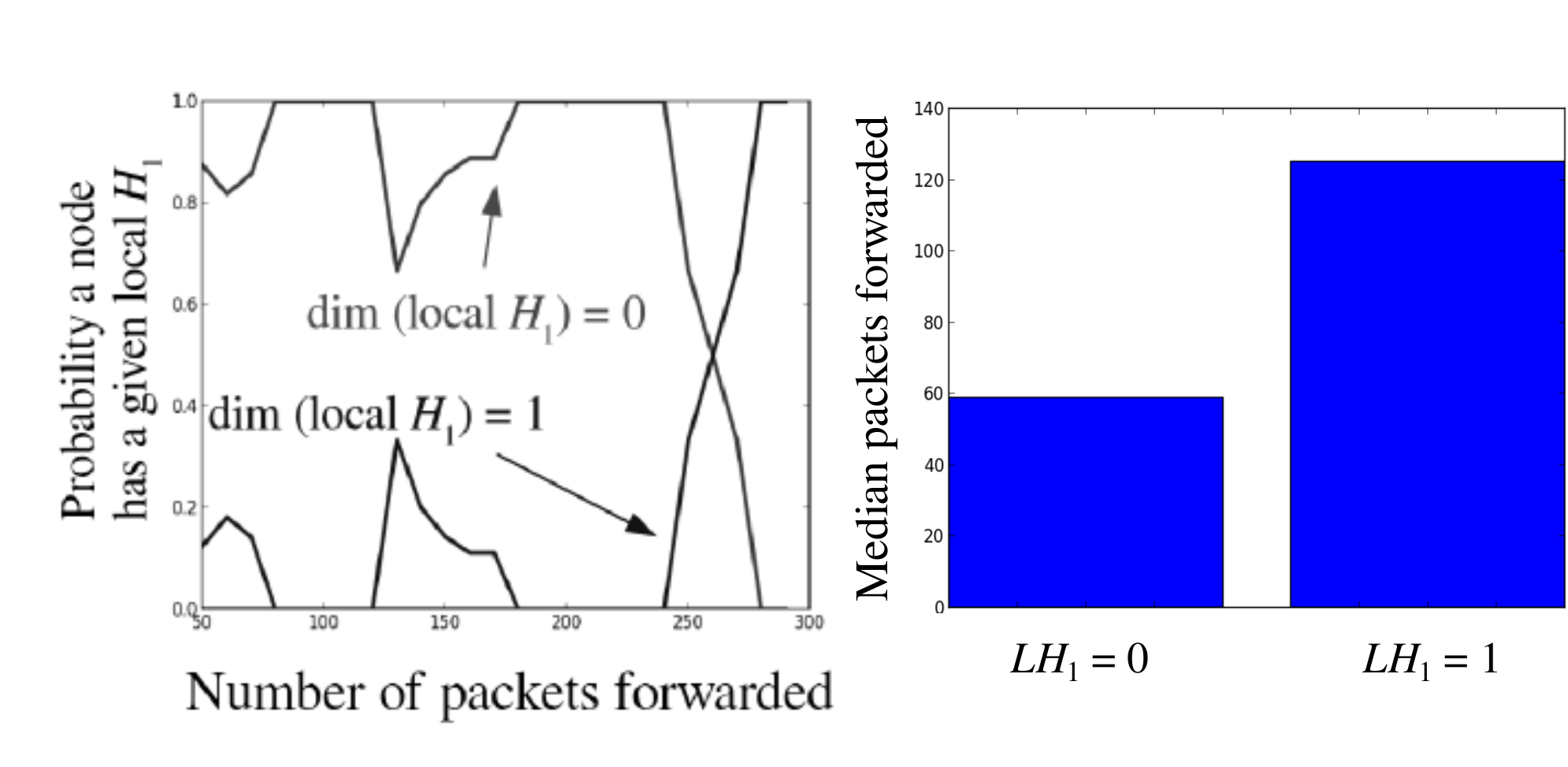} 
\caption{Probability a node has a certain local homology dimension given the number of packets it forwards}
\label{fig:fwdlh}
\end{center}
\end{figure}

Figure \ref{fig:fwdlh} shows the probability that a node forwarding a certain number of packets will have the given value of $LH_1$.  (We did not find a strong correspondence between forwarded packets and $LH_2$.)  It is immediately clear that all nodes forwarding a large number of packets are assigned a high local homology, but the converse is not necessarily true.  Local homology dimension is an indication that a node may be critical, but does not guaranteed that it actually is.

\section*{Acknowledgement}
This research was developed with funding from the Defense Advanced Research Projects Agency (DARPA) via Federal contract HR0011-15-C-0050 and the Air Force Office of Scientific Research via Federal contract FA9550-09-1-0643.  The views, opinions, and/or findings expressed are those of the authors and should not be interpreted as representing the official views or policies of the Department of Defense or the U.S. Government.

\bibliographystyle{IEEEtran}
\bibliography{activation_bib}

\begin{thebibliography}{10}
\providecommand{\url}[1]{#1}
\csname url@samestyle\endcsname
\providecommand{\newblock}{\relax}
\providecommand{\bibinfo}[2]{#2}
\providecommand{\BIBentrySTDinterwordspacing}{\spaceskip=0pt\relax}
\providecommand{\BIBentryALTinterwordstretchfactor}{4}
\providecommand{\BIBentryALTinterwordspacing}{\spaceskip=\fontdimen2\font plus
\BIBentryALTinterwordstretchfactor\fontdimen3\font minus
  \fontdimen4\font\relax}
\providecommand{\BIBforeignlanguage}[2]{{%
\expandafter\ifx\csname l@#1\endcsname\relax
\typeout{** WARNING: IEEEtran.bst: No hyphenation pattern has been}%
\typeout{** loaded for the language `#1'. Using the pattern for}%
\typeout{** the default language instead.}%
\else
\language=\csname l@#1\endcsname
\fi
#2}}
\providecommand{\BIBdecl}{\relax}
\BIBdecl

\bibitem{Nandagopal_2000}
T.~Nandagopal, T.-E. Kim, X.~Gao, and V.~Bharghavan, ``Achieving mac layer
  fairness in wireless packet networks,'' in \emph{Proc. ACM MobiCom}, 2002,
  pp. 87--–98.

\bibitem{Yang_2002}
X.~Yang and N.~Vaidya, ``Priority scheduling in wireless ad hoc networks,'' in
  \emph{Proc. ACM MobiHoc}, 2002.

\bibitem{Jain_2003}
K.~Jain, J.~Padhye, V.~Padmanabhan, and L.~Qiu, ``Impact of interference on
  multi-hop wireless network performance,'' in \emph{Proc. ACM MobiCom}, 2003.

\bibitem{Lee_2007}
J.-W. Lee, M.~Chiang, and R.~Calderbank, ``Utility-optimal random-access
  control,'' \emph{IEEE Trans. Wireless Comm.}, vol.~6, no.~7, pp. 2741--2751,
  2007.

\bibitem{DiSumma_2011}
M.~Di~Summa, A.~Grosso, and M.~Locatelli, ``Complexity of the critical node
  problem over trees,'' \emph{Computers \& Operations Research}, vol.~38,
  no.~12, pp. 1766--1774, 2011.

\bibitem{dinh2012}
T.~N. Dinh, Y.~Xuan, M.~T. Thai, P.~M. Pardalos, and T.~Znati, ``On new
  approaches of assessing network vulnerability: hardness and approximation,''
  \emph{Networking, IEEE/ACM Transactions on}, vol.~20, no.~2, pp. 609--619,
  2012.

\bibitem{Chiang_2007}
M.~Chiang, S.~Low, R.~Calderbank, and J.~Doyle, ``Layering as optimization
  decomposition:a mathematical theory of network architectures,'' \emph{Proc.
  IEEE}, vol.~95, no.~1, January 2007.

\bibitem{Joslyn_2016}
C.~Joslyn, B.~Praggastis, E.~Purvine, A.~S.~M. Robinson, and S.~Ranshous,
  ``Local homology dimension as a network science measure,'' in \emph{accepted
  to SIAM Workshop on Network Science 2016}, Boston, July 2016.

\bibitem{RobinsonGlobalSIP2014}
M.~Robinson, ``Analyzing wireless communication network vulnerability with
  homological invariants,'' in \emph{IEEE Global Conference on Signal and
  Information Processing (GlobalSIP)}, Atlanta, Georgia, 2014.

\bibitem{Noubir_2004}
G.~Noubir, ``On connectivity in ad hoc networks under jamming using directional
  antennas and mobility,'' in \emph{Wired/Wireless Internet
  Communications}.\hskip 1em plus 0.5em minus 0.4em\relax Springer, 2004, pp.
  186--200.

\bibitem{Gueye_2010}
A.~Gueye, J.~C. Walrand, and V.~Anantharam, ``Design of network topology in an
  adversarial environment,'' in \emph{Decision and Game Theory for
  Security}.\hskip 1em plus 0.5em minus 0.4em\relax Springer, 2010, pp. 1--20.

\bibitem{nsnam}
``The {NS-2} network simulator,'' \url{http://www.nsnam.org/}, accessed:
  2016-05-23.

\bibitem{Hatcher_2002}
A.~Hatcher, \emph{Algebraic Topology}.\hskip 1em plus 0.5em minus 0.4em\relax
  Cambridge University Press, 2002.

\bibitem{Shepard_1985}
A.~Shepard, ``A cellular description of the derived category of a stratified
  space,'' Ph.D. dissertation, Brown University, 1985.

\bibitem{Arulselvan_2009}
A.~Arulselvan, C.~W. Commander, L.~Elefteriadou, and P.~M. Pardalos,
  ``Detecting critical nodes in sparse graphs,'' \emph{Computers \& Operations
  Research}, vol.~36, no.~7, pp. 2193--2200, 2009.

\end{thebibliography}

\section*{Appendix}
Although the space of global sections for an activation sheaf is a useful invariant, its sheaf cohomology is rather uninteresting.  We need to enrich their structure somewhat to see this, though.

\begin{df}
  If $\shf{A}$ is an activation sheaf on an abstract simplicial complex $X$, the \emph{vector activation sheaf} $\widehat{\shf{A}}$ is given by specifying its stalks and restrictions:
\begin{enumerate}
\item To each cell $c\in X$, let $\widehat{\shf{A}}(c)$ be the vector space whose basis is $\shf{A}\backslash\{\perp\}$ (so the dimension of this vector space is the cardinality of $\shf{A}$ without counting $\perp$)
\item The restriction map $\widehat{\shf{A}}(c\subset d)(n)$ is the basis projection, which is well-defined since $\shf{A}(d) \subseteq \shf{A}(c)$.
\end{enumerate}
\end{df}

\begin{thm}
  The dimension of the cohomology spaces of a vector activation sheaf $\widehat{\shf{A}}$ on a link complex $X$ are
  \begin{equation*}
    \text{dim }H^k(\widehat{\shf{A}}) = \begin{cases}
      \text{the total number of nodes}&\text{if }k = 0\\
      0&\text{otherwise}\\
      \end{cases}
  \end{equation*}
\end{thm}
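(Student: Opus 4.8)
The plan is to exploit the fact that the stalks of $\widehat{\shf{A}}$ carry a canonical node-indexed basis and that every restriction map is a basis projection, so that the whole sheaf splits into one summand per node. For each node $m \in N$, I would let $\shf{S}_m$ be the subsheaf whose stalk at a cell $c$ is the line spanned by the basis vector $m$ when $m \in \shf{A}(c)$ and is $0$ otherwise, with restrictions inherited from $\widehat{\shf{A}}$. Because each map $\widehat{\shf{A}}(c\subset d)$ sends the basis vector $m$ either to $m$ (when $m \in \shf{A}(d)$) or to $0$, it carries the $m$-line of $\widehat{\shf{A}}(c)$ into the $m$-line of $\widehat{\shf{A}}(d)$; hence $\widehat{\shf{A}} = \bigoplus_{m \in N}\shf{S}_m$ is a decomposition of cellular sheaves. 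Since cohomology commutes with finite direct sums, it suffices to show that each $\shf{S}_m$ has a one-dimensional $H^0$ and vanishing higher cohomology.

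Next I would pin down the support of $\shf{S}_m$. Unwinding the definition of $\shf{A}$, the stalk $\shf{S}_m(c)$ is nonzero exactly when $m$ shares a coface with $c$, that is, when some cell contains both $c$ and $m$; by subset-closure of $X$ this holds iff $c \cup \{m\} \in X$. Writing $S_m = \{c \in X : c\cup\{m\} \in X\}$, this set is closed under taking faces (if $c\cup\{m\}\in X$ and $b \subset c$ then $b\cup\{m\}\in X$), so it is a closed subcomplex; moreover it contains $\{m\}$ and is closed under adjoining $m$, so $S_m$ is a simplicial cone with apex $m$. On $S_m$ every restriction map of $\shf{S}_m$ is the identity on the $m$-line (since $d \in S_m$ forces $m \in \shf{A}(d)$), so $\shf{S}_m$ is precisely the extension by zero of the constant field-valued sheaf on $S_m$. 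For such a sheaf the cellular cochain complex of $X$ reduces termwise to the simplicial cochain complex of $S_m$, giving $H^k(\shf{S}_m) \cong H^k(S_m;\mathbb{F})$.

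Finally I would invoke contractibility: a nonempty simplicial cone is contractible, so $H^0(S_m;\mathbb{F}) \cong \mathbb{F}$ and $H^k(S_m;\mathbb{F}) = 0$ for $k>0$. Summing over the $|N|$ nodes yields $\dim H^0(\widehat{\shf{A}}) = |N|$ and $\dim H^k(\widehat{\shf{A}}) = 0$ for $k > 0$, as claimed.

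I expect the main obstacle to be the bookkeeping of the middle step rather than any deep topology: one must check carefully that the node-wise splitting is genuinely a splitting of sheaves (the basis projections never mix distinct node-lines) and, in tandem, that $S_m$ is a cone on which the restrictions act as the identity. These two verifications are exactly what upgrade the algebraic direct sum into a sum of \emph{constant} sheaves on \emph{contractible} pieces; once they are in place the cohomology computation is immediate. I would also remark that the argument uses only the simplicial structure of $X$ and the combinatorial definition of $\shf{A}$, so the conclusion in fact holds for an arbitrary abstract simplicial complex, with the link complex being merely the case of interest here.
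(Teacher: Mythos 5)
Your proof is correct, and it takes a genuinely different route from the paper's. The paper gets $H^0$ by the same observation you make implicitly (global sections of $\widehat{\shf{A}}$ are arbitrary formal combinations of nodes), but it kills the higher cohomology by decomposing the \emph{space}: it covers $X$ by its facets, forms a short exact sequence $0 \to \widehat{\shf{A}} \to \bigoplus_i \shf{F}_i \to \shf{S} \to 0$ with each $\shf{F}_i$ a sum of constant sheaves on a closed facet, and chases the resulting long exact sequence, handling $k>1$ by an appeal to Corollary \ref{cor:unaffected} for the structure of $\shf{S}$ and handling $k=1$ by a separate surjectivity argument on $H^0$'s. You instead decompose the \emph{sheaf}: since the stalks carry node-indexed bases and the basis projections never mix distinct node-lines, $\widehat{\shf{A}} \cong \bigoplus_{m \in N} \shf{S}_m$, where $\shf{S}_m$ is the constant sheaf extended by zero over $S_m = \{c \in X : c \cup \{m\} \in X\} = \cl \st \{m\}$, which is a cone with apex $m$ and hence acyclic; the theorem follows by summing over nodes. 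Your route buys three things. First, it is more elementary: no exact sequences and no degree-$1$ special case. Second, it is tighter at precisely the point where the paper is thinnest: the paper asserts that constant sheaves supported on closed subcomplexes ``only have nontrivial cohomology in degree $0$,'' but such sheaves compute the cohomology of their supports, so acyclicity of the supports must be argued --- it is automatic for your cones (and for the facets $\shf{F}_i$, which are simplices), but it is not obviously so for the quotient sheaf $\shf{S}$, whose support the paper describes only loosely as ``intersections of facets.'' Third, as you note, nothing in your argument uses the clique-complex structure, so the computation holds verbatim for any abstract simplicial complex, e.g.\ the interference complex; the paper's statement is for link complexes, and its analysis of $\shf{S}$ implicitly leans on that setting. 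What the paper's approach buys in exchange is narrative coherence: the facet decomposition is exactly the region-of-influence machinery the paper develops, so its proof illustrates how those tools interact with cohomology, even though it is longer and less airtight. One small caveat worth making explicit in your write-up: your identification $m \in \shf{A}(c) \iff c \cup \{m\} \in X$ reads the coface condition $c \subset d$ in the definition of $\shf{A}$ non-strictly, so that $m \in \shf{A}(\{m\})$ even when $\{m\}$ is a facet (e.g.\ an isolated node); this is the reading the theorem's $H^0$ count forces in any case, but you should say so in one sentence.
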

\begin{proof}
  Every global section of $\shf{A}$ corresponds to a global section of $\widehat{\shf{A}}$, but formal linear combinations of global sections of $\shf{A}$ are also global sections of $\widehat{\shf{A}}$.  Therefore, a global section of $\widehat{\shf{A}}$ merely consists of a list of those nodes that are transmitting, without regard for whether they interfere.

  The fact that the other cohomology spaces are trivial is considerably more subtle.  Consider the decomposition
  \begin{equation*}
    X = \bigcup_i F_i
  \end{equation*}
  of the link complex into the set of its facets.  Suppose that $F_i$ is a facet of dimension $k$, and define $\shf{F}_i$ to be the direct sum of $k+1$ copies of the constant sheaf supported on $F_i$.  (Each copy corresponds one of the vertices of $F_i$.)  Then there is an exact sequence of sheaves
  \begin{equation*}
    \xymatrix{
      0 \to \widehat{\shf{A}} \ar[r]^\Delta & \bigoplus_i \shf{F}_i \ar[r]^m & \shf{S} \to 0\\
      }
  \end{equation*}
  where $\Delta$ is a map that takes a basis vector corresponding to a given node to the linear combination of all corresponding basis vectors in each copy of the constant sheaves, and $m$ is therefore a kind of difference map.  This exact sequence leads to a long exact sequence
  \begin{equation*}
\dotsb H^{k-1}(\shf{S}) \to H^k(\widehat{\shf{A}}) \to \bigoplus_i H^k(\shf{F}_i) \to H^k(\shf{S}) \dotsb
  \end{equation*}
Since each $\shf{F}_i$ is a direct sum of constant sheaves supported on a closed subcomplex, it only has nontrivial cohomology in degree 0.

Observe that $\shf{S}$ is a sheaf supported on sets of cells lying in the intersections of facets.  By Corollary \ref{cor:unaffected}, $\shf{S}$ must be a direct sum of copies of constant sheaves supported on closed subcomplexes, like each $\shf{F}_i$.  Thus $\shf{S}$ only has nontrivial cohomology in degree 0, which means that for $k>1$, $H^k(\widehat{\shf{A}}) = 0$.

It therefore remains to address the $k=1$ case, which comes about from the exact sequence
\begin{equation*}
  \bigoplus_i H^0(\shf{F}_i) \to H^0(\shf{S}) \to H^1(\widehat{\shf{S}}) \to 0.
\end{equation*}
The leftmost map is surjective, since every global section of $\shf{S}$ is given by specifying a single transmitting node.  By picking exactly one facet containing that node, a global section of the corresponding $\shf{F}_i$ may be selected in the preimage.  Thus the map $H^0(\shf{S}) \to H^1(\widehat{\shf{S}})$ must be the zero map and yet also surjective.  This completes the proof.
\end{proof}

\end{document}